\newcommand{\ourtitle}{Elicitation for Aggregation}
\newtheorem{theorem}{Theorem}
\newtheorem{definition}{Definition}
\newtheorem{lemma}[theorem]{Lemma}
\newtheorem{corollary}[theorem]{Corollary}
\newtheorem{conjecture}{Conjecture}
\newcommand{\Comments}{1}
\newcommand{\mynote}[2]{\ifnum\Comments=1\textcolor{#1}{#2}\fi}
\newcommand{\ian}[1]{\mynote{blue}{[IAK: #1]}}
\newcommand{\raf}[1]{\mynote{green}{[RMF: #1]}}
\newcommand{\yc}[1]{\mynote{red}{[YC: #1]}}
\newcommand{\tr}{\top}
\newcommand{\E}{\mathop{\mathbb{E}}}
\newcommand{\Var}{\mathop{\mathrm{Var}}}
\newcommand{\N}{\mathrm{N}}
\newcommand{\R}{\mathcal{R}}
\newcommand{\X}{\mathcal{X}}
\def\reals{\mathbb{R}}
\newcommand{\argmax}{\mathop{\mathrm{argmax}}}
\newcommand{\inprod}[1]{\left\langle #1 \right\rangle}
\newcommand{\defeq}{\doteq}
\newcommand{\ones}{\mathbbm{1}}
\renewcommand{\Comments}{0}
\begin{document}
\title{\ourtitle}
\author{Rafael M. Frongillo\\Harvard University\\\url{raf@cs.berkeley.edu} \And Yiling Chen\\Harvard University\\\url{yiling@seas.harvard.edu} \And Ian A. Kash\\Microsoft Research\\\url{iankash@microsoft.com}}
\maketitle
\begin{abstract}\vspace{-2pt}
We study the problem of eliciting and aggregating probabilistic information from multiple agents.
In order to successfully aggregate the predictions of agents, the principal needs to elicit some notion of \emph{confidence} from agents, capturing how much experience or knowledge led to their predictions.
To formalize this, we consider a principal who wishes to elicit predictions about a random variable from a group of Bayesian agents, each of whom have privately observed some independent samples of the random variable, and hopes to aggregate the predictions as if she had directly observed the samples of all agents.
Leveraging techniques from Bayesian statistics, we represent confidence as the number of samples an agent has observed, which is quantified by a hyperparameter from a conjugate family of prior distributions.
This then allows us to show that if the principal has access to a few samples, she can achieve her aggregation goal by eliciting predictions from agents using proper scoring rules.
In particular, if she has access to one sample, she can successfully aggregate the agents' predictions if and only if every posterior predictive distribution corresponds to a unique value of the hyperparameter.
Furthermore, this uniqueness holds for many common distributions of interest.
When this uniqueness property does not hold, we construct a novel and intuitive mechanism where a principal with two samples can elicit and optimally aggregate the agents' predictions. 
\end{abstract}

\section{Introduction}

Imagine that a principal, Alice, wishes to estimate the probability of rain tomorrow.  She consults two agents, Bob who says 80\%, and Carol who says 10\%.  How should Alice aggregate these two widely disparate predictions?  If she knew that Bob happened to have spent the day studying radar imagery, whereas Carol just looked outside for a second, it would seem obvious that Alice should give much higher weight to Bob's prediction than Carol's.  In other words, in order to aggregate these predictions, Alice needs to know the agents' \emph{confidence} about their reports.

The aggregation of probabilistic information is an important problem in many domains, from multiagent systems to crowdsourcing.  In this paper, we propose a general method of eliciting predictions together with a measure of confidence about those predictions, and show how to use this information to optimally aggregate in many situations.

We consider a Bayesian model where a principal, who can consult a group of risk-neutral agents, wishes to obtain an informed prediction about a random variable. The random variable follows a parameterized distribution that is generated by some unknown parameters, the prior distribution of which is common knowledge. Each agent privately observes some independent samples of the random variable and forms a belief about it. The principal then elicits the agents' predictions of the random variable, and her goal is to optimally aggregate agents' private beliefs based on these predictions --- to compute the distribution of the random variable as if she had observed the samples of all agents.   

This paper focuses on designing elicitation mechanisms to achieve this optimal aggregation. We show that when the prior distribution of the unknown parameters comes from a {\em conjugate prior family} of the distribution of the random variable, the principal can {\em leverage a few independent samples} that she observes to successfully elicit enough information from the agents to achieve the optimal aggregation. This relies on important properties of the conjugate prior family.  
Intuitively,  we use the hyperparameter of a distribution in the conjugate family to quantify the confidence of an agent's belief as the hyperparameter encodes information about the samples that the agent has observed. Our mechanisms work by eliciting predictions that allow the principal to infer the confidence of the agents and then make use of the confidence to achieve the optimal aggregation.  

In particular, we prove that the principal can leverage a single sample to achieve optimal aggregation if and only if each distribution (modulo an equivalence relation) in the conjugate family maps to a unique hyperparameter. With this, we demonstrate how elicitation and optimal aggregation work for many common distributions of the random variable, including the Poisson, Normal, and uniform distributions, among others.

When the unique mapping condition is not satisfied, such as in the rain example above, we show that the hyperparameter of an agent's posterior distribution cannot be inferred with the principal's single sample.  Fortunately, in this setting we construct a mechanism where the principal can still achieve the optimal aggregation if she has access to \emph{two} independent samples of the random variable.  Our mechanism simply asks each agent for his believed distribution of the first sample, and the likelihood that the two samples are the same.  We show that this simple and intuitive approach gives the principal second-order information about agents' beliefs, which is enough to achieve optimal aggregation.

\subsection{Related Work}

Our problem simultaneously considers both one-shot elicitation of information from multiple agents and the subsequent aggregation of the information. 

In one-shot elicitation, the principal interacts with each agent independently and the agents report their predictions without knowing others' predictions. There is a rich literature on mechanisms for one-shot elicitation. The simplest is the classical proper scoring rules~\cite{Brier:50,Win:69,Savage:71,gneiting2007strictly}, which incentivize risk-neutral agents to honestly report their predictions. Proper scoring rules are the building blocks for most elicitation mechanisms, including our mechanisms in this paper. To reduce the total payment of the principal, researchers design shared scoring rules~\cite{Kilgour:04,Johnstone:2007} and wagering mechanisms~\cite{Lambert:08b,L+14,Chen:2014c} that have various desirable theoretical properties. Both shared scoring rules and wagering mechanisms engage agents in a one-shot betting to elicit their information and do not require the principal to subsidize the betting. In contrast to our problem, all these one-shot elicitation mechanisms do not consider the aggregation of the elicited information. 


Sequential mechanisms have been designed to both elicit and aggregate information from agents. Most well known probably are prediction markets~\cite{Ber:01,Wol:04}, especially the market scoring rules mechanism~\cite{Han:03,Han:07}, where agents can sequentially interact with the market mechanism for multiple times to reveal their information. Information aggregation happens when agents update their beliefs after observing other agents' activities in the market. However, the dynamic nature of these mechanisms can induce complicated strategic play and obfuscate individual-level information~\cite{hansen-manipulation,Chen:10,Gao:2013}. In this paper, we let the principal rather than the agents take the responsibility of aggregating information, and couple aggregation with one-shot elicitation that is incentive compatible for the agents.    

To achieve optimal aggregation, the principal in our paper needs to know the confidence of agents' predictions. The work of \citeauthor{fang2007putting} (\citeyear{fang2007putting})
is the closest to ours in this perspective. They consider the one-shot elicitation of both agents' predictions and the precision of their predictions and then use the elicited precision to optimally aggregate. They use Normal distributions to model both the distribution of the random variable and the prior distribution of the unknown parameters. We consider general parameterized distributions of the random variable and their corresponding conjugate priors, which include the model of \citeauthor{fang2007putting} (\citeyear{fang2007putting}) as a special case. 

\section{Model and Background}
\label{sec:conf-model}

We introduce our model, which describes how agents form their beliefs, the principal's elicitation mechanism, the principal's aggregation goal, and a family of parameterized prior distributions that we will focus on in this paper. 

\subsection{Beliefs of Agents}
The principal would like to get information from $m$ agents about a random variable with observable outcome space $\X$. The distribution of the random variable comes from a parameterized family of distributions $\{p(x|\theta)\}_{\theta\in\Theta} \subseteq \Delta_\X$, where $\Theta$ is the parameter space.\footnote{By convention $p(x|...)$ often refers to the entire distribution, rather than the density value at a particular $x$; the usage should be clear from context.} There exists a prior distribution $p(\theta)$ over the parameters. Both $\{p(x|\theta)\}_{\theta\in\Theta}$ and $p(\theta)$ are common knowledge to the agents and the principal. 

Nature draws the true parameter $\theta^*$, which is unknown to both the agents and the principal, according to the prior $p(\theta)$. Each agent then receives some number of samples from $\X$ which drawn independently according to $p(x|\theta^*)$. In other words, if $x_1,\ldots,x_N$ is an enumeration of all samples received by any of the agents, then $p(x_i,x_j|\theta^*) = p(x_i|\theta^*)p(x_j|\theta^*)$ for all $i,j$ and all $\theta^* \in \Theta$. 


Agents form their beliefs about the random variable according to the Bayes' rule. If an agent receives samples $x_1,\ldots,x_{N}$, then we write the agent's belief as
\begin{multline}
  \label{eq:conf-ppd}
  p = p(x|x_1,\ldots,x_{N}) = \int_\Theta p(x|\theta) p(\theta|x_1,\ldots,x_{N}) d\theta \\\propto \int_\Theta p(\theta) p(x|\theta) \prod_j p(x_j|\theta) d\theta ~.
\end{multline}
This distribution is known as the \emph{posterior predictive distribution (PPD)} of $x$ given samples $x_1,\ldots,x_{N}$, and will be a central object of our analysis.

\subsection{Elicitation and Scoring Rules}
\label{sec:conf-scoring-rules}

An important feature of our model is that the principal has access to a sample $x\in\X$ herself, and can leverage this sample 
using scoring rule techniques
to elicit information from the agents. The principal's sample is also independently drawn according to $p(x|\theta^*)$. 
(In Section~\ref{sec:conf-non-unique-case}, we will allow the principal to have two such samples.)

The principal will choose a report space $\R$ and a scoring mechanism $S:\R\times\X\to\reals$, and request a report $r_i\in\R$ from each agent $i$. Upon receiving her sample $x$, the principal will give each agent a score of $S(r_i,x)$.  We assume that agents seek to maximize their expected score, so that if agent $i$ believes $x\sim p$ for some $p\in\Delta_\X$, then he will report $r_i \in \argmax_{r\in\R} \E_{x\sim p} [S(r,x)]$.  

Strictly proper scoring rules~\cite{Brier:50,gneiting2007strictly} are the basic tools for designing such scores $S$ that provide good incentive properties. 
A scoring rule is strictly proper if and only if reporting one's true prediction uniquely maximizes the expected score. Strictly proper scoring rules are most commonly used for eliciting a distribution over a finite outcome space, but also extend naturally to eliciting distributions with continuous support ~\cite{Matheson:76} and properties of distributions such as moments~\cite{gneiting2007strictly}. For example, the logarithmic scoring rule 
\begin{equation}
  \label{eq:conf-log-score}
  S(p,x) = \log p(x)
\end{equation}
is a popular strictly proper scoring rule for eliciting a distribution over a finite $\X$, where $p(x)$ is the reported probability for outcome $x$. Another popular strictly proper scoring rule, the Brier score~\cite{Brier:50}, can be used to elicit the mean of a random variable $\E [x]$, when taking the following form
\begin{equation}
  \label{eq:brier-mean}
  S(r,x) = 2r x - r^2
\end{equation}
or the first $k$ moments $(\E [x], \dots \E [x^k])$, when used as 
\begin{equation}
  \label{eq:brier-moments}
S(r_1,\ldots,r_k,x) = \sum_{i=1}^k 2r_i x^i - r_i^2. 
\end{equation}

\subsection{Aggregation}

The goal of the principal is to aggregate the information of the agents to obtain an accurate distribution of the random variable as if she has  
access to all of the samples from all agents. Throughout the paper, we will denote by $X$ this multiset\footnote{We use multisets, or equivalently unordered lists, as when $\X$ is a finite set it is likely that samples will not be unique.} of all observed samples by agents.

\begin{definition}
  \label{def:conf-opt-agg}
  Given prior $p(\theta)$ and data $X$ distributed among the agents, the \emph{global posterior predictive distribution (global PPD)} is the posterior predictive distribution $p(x|X)$.
\end{definition}

The goal of this paper is to design mechanisms which truthfully elicit information from agents in such a way that the global PPD $p(x|X)$ can be computed.  We capture this desideratum in the following definition.

\begin{definition}
  \label{def:conf-mech-opt-agg}
  Let $S : \R\times \X \to \reals$ be given, and let each agent $i$ receive samples $X^i$, with $X = \uplus_i X^i$ (multiset addition).  Let $r_i$ be the report of agent $i$, namely $r_i = \argmax_r \E_{p(x|X^i)}[S(r,x)]$.  Then $S$ \emph{achieves optimal aggregation} if there exists some function $g:\R^m \to \Delta_\X$ such that $g(r_1,\cdots,r_m) = p(x|X)$.
\end{definition}

It is worth noting that the report space $\R$ of the elicitation mechanism is often different from the space of PPD, i.e. $\Delta_\X$. In fact, we will design elicitation mechanisms such that the elicited reports help the principal to infer the {\em confidence} of agents, capturing the amount of samples that the agents have experienced, which then enables the optimal aggregation. This leads to our focus on the conjugate prior family. 

As a motivating example, consider the Normal distribution case, with $p(x|\theta) = \N(\theta,1)$ and $p(\theta) = \N(\mu,1)$, where $\N(\mu,\sigma^2)$ is the normal distribution with mean $\mu$ and variance $\sigma^2$.  It is well known that an agent $i$ has posterior distribution $p(\theta| X_i) = \N\bigl((\mu + x_1 + \cdots + x_{n_i})/(n_i+1),1/(n_i+1)\bigr)$ after observing samples $X_i = \{x_1,\ldots,x_{n_i}\}$. His estimate of the mean of $\theta$ is the weighted sum of his sample and the prior mean. The inverse of the variance, $n_i+1$, is called the {\em precision}, which encodes the agent's confidence or experience. Hence, if the principal can elicit mean estimate $\mu_i$ and precision $n_i +1$ from each of the $m$ agents, he can calculate the global PPD, which is a Normal distribution with mean $\tfrac 1 {N+1} \left(\mu + \sum_i n_i \mu_i\right)$ and variance $\tfrac 1 {N+1}$, where $N = \sum_i n_i$. This is the case studied by \citeauthor{fang2007putting} (\citeyear{fang2007putting}). We will see next that the general notion of conjugate priors will allow us to preserve the important aggregation properties we require elegantly.


\subsection{Conjugate Priors}

In this paper, we focus on prior distributions $p(\theta)$ that come from the conjugate prior family for distributions $\{p(x|\theta)\}_{\theta\in \Theta}$. This ensures that the posterior distribution on $\theta$ is in the same family of distributions as the prior $p(\theta)$ and also simplifies the optimal aggregation problem.    


While many notions of conjugate priors appear in the literature~\cite{fink1997compendium,gelman2013bayesian}, we adopt the following definition, which says that the conjugate prior family is parameterized by \emph{hyperparametrs} $\nu$ and $n$ which are {\em linearly} updated after observing samples: the new parameters can be written as a linear combination of the old parameters and sufficient statistics for the samples.

\begin{definition}
  \label{def:conf-conj-prior}
  Let $P=\{p(x|\theta):\theta\in\Theta\} \subseteq \Delta_\X$ be given.  A family of distributions $\{p(\theta|\nu,n) : \nu\in\reals^k, n\in\reals_+\} \subseteq \Delta_\Theta$ is a \emph{conjugate prior} family for $P$ if there exists a statistic $\phi:\X\to\reals^k$ such that, given the prior distribution $p(\theta|\nu_0, n_0)$, the posterior distribution on $\theta$ after observing $x$,
  \begin{equation}
    \label{eq:conf-post-update}
    p(\theta|\nu_0,n_0,x)
    = \frac {p(\theta|\nu_0,n_0) p(x|\theta)} {\int_\Theta p(\theta'|\nu_0,n_0) p(x|\theta') d\theta'},
  \end{equation}
  is equal to $p(\theta|\nu_0+\phi(x),n_0+1)$ for all  $\nu_0$ and $n_0$. 
\end{definition}

Using conjugate priors, the optimal aggregation problem simplifies considerably.  Given prior $p(\theta|\nu_0,n_0)$ and data $X=\{x_1,\ldots,x_N\}$ distributed among the agents, the global PPD can be written succinctly as
  \begin{equation}
    \label{eq:conf-opt-agg}
p(x|\nu_0,n_0,X) = p\left(x\;\Big|\;\nu_0+\textstyle\sum\limits_{i=1}^{N}\phi(x_i),\,n_0+N\right).
  \end{equation}
We can see that as we require $n$ to update by $1$ for each additional sample, $n - n_0$ exactly corresponds to the number of samples seen in total.  This is precisely the notion of confidence we wish to quantify --- the amount of data or experience that led to a prediction.  In particular, if we could obtain the hyperparameters $(\nu_i,n_i)$ for an agent's report, we could directly compute the number of samples $N_i = n_i - n_0$ they observed, as well as the sum of the sufficient statistics of their samples, $\sum_{x\in X^i} \phi(x)$.
If the principal can gather these two quantities from each agent $i$, then using the identities
$\sum_{x\in X^i} \phi(x) = \nu_i - \nu_0$ and $N_i = n_i - n_0$, the principal can aggregate these parameters by the observation that
\begin{align}
  \sum_{i=1}^N \phi(x_i) &= \sum_{i=1}^m \sum_{x\in X^i} \phi(x) = \sum_{i=1}^m (\nu_i - \nu_0) \label{eq:conf-agg-nu}\\
  N &= \sum_{i=1}^m N_i = \sum_{i=1}^m (n_i - n_0)~.
  \label{eq:conf-agg-n}
\end{align}
From here, the principal simply plugs these values into eq.~\eqref{eq:conf-opt-agg} to obtain the global PPD.

\section{Unique Predictive Distributions}
\label{sec:conf-uniq-pred-distr}

In this section, we show how the principal can leverage a single sample $x\in\X$ to elicit the hyperparameters of the posterior distributions of the agents, provided that the mapping from hyperparameters to predictive posterior distributions is unique.  Note that this statement contains two different types of posterior distributions, and as the distinction is important we take a moment to recall their differences.  After making his observations, an agent will have updated his hyperparameters to $(\nu,n)$.   This gives him a {\em posterior} distribution $p(\theta| \nu,n)$ over the parameter of the random variable and a {\em predictive posterior} distribution (PPD) $p(x| \nu,n)$ of the random variable itself.  

We begin with two simple but important results.  The first is an analog of the revelation principle from economic theory, showing that the most a principal with a single sample $x\in\X$ can get from an agent is the agent's private belief $p\in\Delta_\X$ about $x$.

\begin{lemma}
  \label{lem:conf-folk}
  Given a sample $x\in\X$ which an agent believes to be drawn from $p\in\Delta_\X$, any information obtained with a mechanism $S:\R\times\X\to\reals$, from an agent maximizing his expected score, can be written as a function of $p$.
\end{lemma}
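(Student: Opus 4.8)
The plan is to recognize this as an instance of the revelation principle: the scoring mechanism $S$ is fixed in advance, so a rational agent's behavior is pinned down by his belief $p$ alone, and therefore anything the principal can learn is mediated entirely through a quantity that is itself a function of $p$.

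First I would introduce the agent's best-response correspondence $B_S : \Delta_\X \toto \R$ given by $B_S(p) = \argmax_{r\in\R} \E_{x\sim p}[S(r,x)]$. By the behavioral assumption in Section~\ref{sec:conf-scoring-rules}, the agent submits some report $r_i \in B_S(p)$. The crucial observation is that the right-hand side of this definition depends on the data of the problem only through $p$, since the mechanism $S$ and the report space $\R$ are fixed; hence the set of reports that an agent with belief $p$ could rationally submit is determined by $p$. Next I would address the only genuine subtlety: $B_S(p)$ need not be a singleton, so when several reports are simultaneously optimal the agent may choose any of them. I would resolve this by positing a fixed (deterministic, otherwise arbitrary) selection rule $\sigma : \Delta_\X \to \R$ with $\sigma(p) \in B_S(p)$ for every $p$, so that the realized report is $r_i = \sigma(p)$, manifestly a function of $p$. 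Finally, ``any information obtained with a mechanism from an agent'' is by definition computed from the agent's report $r_i$ (the principal's own sample $x$ is not information elicited \emph{from the agent}), so it is the composition of some map on $\R$ with $\sigma$, and thus a function of $p$.

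The main obstacle is a mild one: it is purely a matter of making precise what ``information obtained from an agent'' means and of coping with non-uniqueness of best responses. Once we agree that the principal's inference about the agent is a post-processing of the report, and fix a tie-breaking rule for $\argmax$, the conclusion is immediate. If one prefers not to commit to a tie-breaking rule, the same argument shows that the \emph{set} of pieces of information the principal could obtain is a function of $p$, which is equally sufficient for all subsequent uses in the paper (in particular, for characterizing when a single sample suffices to recover the hyperparameters $(\nu,n)$).
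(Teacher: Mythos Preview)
Your proposal is correct and follows essentially the same route as the paper: both arguments select, for each belief $p$, an element of $\argmax_{r\in\R}\E_{x\sim p}[S(r,x)]$ to produce a map $\Delta_\X\to\R$ through which the report (and hence any derived information) factors. The only minor difference is that the paper also handles the case where the $\argmax$ is empty by defaulting to a fixed $r_0\in\R$, whereas you implicitly assume $B_S(p)\neq\emptyset$; this is harmless since the lemma's hypothesis already posits an agent who is maximizing.
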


\begin{proof}
  We need only find a function $f:\Delta_\X\to\R$ such that $f(p) \in \argmax_{r\in\R} \E_{x\sim p}[S(r,x)]$ whenever the $\argmax$ exists.
  Let $r_0\in\R$ be arbitrary.  For all $p\in\Delta_\X$, simply select $r_p \in \argmax_{r\in\R} \E_{x\sim p}[S(r,x)]$, or $r_p = r_0$ if the $\argmax$ is not defined, and let $f(p) = r_p$.
\end{proof}

While intuitive and almost obvious, Lemma~\ref{lem:conf-folk} is quite useful when thinking about elicitation problems.  For example, in our setting it is certainly clear that the principal can take $\R = \Delta_\X$ and use any strictly proper scoring rule to get the agent's PPD $p(x|\nu,n)$.
One might be tempted, however, to try to get more information: if one could simply elicit the posterior $p(\theta|\nu,n)$, then the hyperparametrs $(\nu,n)$ would be readily available for aggregation.  One tantalizing scheme would be to compute the distribution $p(\theta|x)$ 
and draw a sample $\hat\theta\sim p(\theta|x)$,
and then use this $\hat\theta$ to elicit $p(\theta|\nu,n)$ using the log scoring rule~\eqref{eq:conf-log-score}.  Lemma~\ref{lem:conf-folk} says that, while this may succeed, it will only succeed when the principal could have simply computed $p(\theta|\nu,n)$ from the PPD $p(x|\nu,n)$ to begin with.

For precisely this reason, we will see that being able to map the PPD to the posterior distribution is crucial to being able to optimally aggregate.  Before proving this, we need to introduce some more precise notation to describe the relationship between the hyperparameters and the PPD.

\begin{definition}
  \label{def:conf-reachable}
  Given hyperparameters $(\nu_0,n_0)$, we say $(\nu,n)$ is \emph{reachable from $(\nu_0,n_0)$} if there exists a multiset $X$ of $\X$ such that $\nu = \nu_0 + \sum_{x\in\X} \phi(x)$ and $n = n_0 + |X|$.  Additionally, we define the relation $(\nu,n)\equiv(\nu',n')$ if for all such $X$, including $\emptyset$, we have $p(x|\nu,n,X) = p(x|\nu',n',X)$.
\end{definition}

\begin{theorem}
  \label{thm:conf-unique-pred}
  Given a family of distributions $\{p(x|\theta)\}$ and conjugate prior $p(\theta|\nu_0,n_0)$, there exists a mechanism $S$ achieving optimal aggregation if and only if for all $(\nu,n)$ and $(\nu',n')$ reachable from $(\nu_0,n_0)$ we have that $p(x|\nu,n) = p(x|\nu',n')$ implies $(\nu,n)\equiv(\nu',n')$.
\end{theorem}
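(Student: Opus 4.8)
The plan is to prove both directions of the equivalence; the ``if'' direction is constructive, the ``only if'' direction is a short indistinguishability argument built on Lemma~\ref{lem:conf-folk}. For the ``if'' direction, I would have the principal run any strictly proper scoring rule with report space $\R = \Delta_\X$, so that (by Lemma~\ref{lem:conf-folk} together with strict properness) each agent $i$ reports exactly his PPD $r_i = p(x\mid\nu_i,n_i)$, where $(\nu_i,n_i)$ are the hyperparameters induced by his samples $X^i$ and hence reachable from $(\nu_0,n_0)$. It then suffices to show that the global PPD $p(x\mid X)$ is a function of $(r_1,\dots,r_m)$; one defines $g$ by mapping $(r_1,\dots,r_m)$ to $p(x\mid X)$ for any data profile $(X^1,\dots,X^m)$ inducing these reports, and arbitrarily elsewhere. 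So suppose two profiles $(X^i)_i$ and $(\tilde X^i)_i$, with induced hyperparameters $(\nu_i,n_i)$ and $(\tilde\nu_i,\tilde n_i)$, yield the same reports, i.e. $p(x\mid\nu_i,n_i)=p(x\mid\tilde\nu_i,\tilde n_i)$ for every $i$. Since all these hyperparameters are reachable, the hypothesis gives $(\nu_i,n_i)\equiv(\tilde\nu_i,\tilde n_i)$ for each $i$. I would then show $p(x\mid X)=p(x\mid\tilde X)$ by rewriting $p(x\mid X)$ via eq.~\eqref{eq:conf-opt-agg} and swapping one agent's contribution at a time: absorbing agent $1$'s sufficient statistic into the prior turns $p(x\mid X)$ into $p\bigl(x\mid \nu_1+\sum_{x'\in Y}\phi(x'),\,n_1+|Y|\bigr)$ with $Y=\uplus_{i\ge 2}X^i$, so $(\nu_1,n_1)\equiv(\tilde\nu_1,\tilde n_1)$ lets us replace $(\nu_1,n_1)$ by $(\tilde\nu_1,\tilde n_1)$. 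The crucial observation is that $\tilde\nu_1-\nu_0=\sum_{x'\in\tilde X^1}\phi(x')$ for the genuine multiset $\tilde X^1$, so after the swap the running offset is still the sufficient statistic of an actual multiset and the same move applies to agent $2$, then agent $3$, and so on, yielding after $m$ steps $p(x\mid X)=p\bigl(x\mid \nu_0+\sum_i(\tilde\nu_i-\nu_0),\,n_0+\sum_i(\tilde n_i-n_0)\bigr)=p(x\mid\tilde X)$.

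For the ``only if'' direction I argue by contraposition. Suppose there are reachable $(\nu,n),(\nu',n')$ with $p(x\mid\nu,n)=p(x\mid\nu',n')$ but $(\nu,n)\not\equiv(\nu',n')$; by Definition~\ref{def:conf-reachable} there is a multiset $X_0$ with $p\bigl(x\mid\nu+\sum_{x'\in X_0}\phi(x'),\,n+|X_0|\bigr)\ne p\bigl(x\mid\nu'+\sum_{x'\in X_0}\phi(x'),\,n'+|X_0|\bigr)$, and $X_0\ne\emptyset$ since otherwise this would contradict $p(x\mid\nu,n)=p(x\mid\nu',n')$. Fix multisets $X_a,X_b$ realizing $(\nu,n)$ and $(\nu',n')$ from $(\nu_0,n_0)$, and consider $m=2$ agents. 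In scenario~1 let agent $1$ observe $X_a$ and agent $2$ observe $X_0$; in scenario~2 let agent $1$ observe $X_b$ and agent $2$ observe $X_0$. Agent~1's belief is $p(x\mid\nu,n)$ in the first scenario and $p(x\mid\nu',n')$ in the second, and these coincide, so by Lemma~\ref{lem:conf-folk} agent~1 submits the same report in both, as does agent~2; hence any aggregator $g$ returns the same distribution in the two scenarios. But by eq.~\eqref{eq:conf-opt-agg} the two global PPDs are $p\bigl(x\mid\nu+\sum_{x'\in X_0}\phi(x'),\,n+|X_0|\bigr)$ and $p\bigl(x\mid\nu'+\sum_{x'\in X_0}\phi(x'),\,n'+|X_0|\bigr)$, which differ by the choice of $X_0$, so no $S$ can achieve optimal aggregation.

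The step I expect to be the main obstacle is the chained use of the equivalence relation in the ``if'' direction: one must verify carefully that replacing an agent's hyperparameters by an equivalent pair leaves the accumulated ``remainder'' in the form $\nu_0+\sum_{x'\in W}\phi(x')$ for a bona fide multiset $W$ of samples, since $\equiv$ is defined only for shifts by sufficient statistics of actual multisets; this is exactly what makes the induction go through. Minor additional care is needed to dispatch the $X_0=\emptyset$ case in the converse and to confirm that the two scenarios there are admissible data profiles in the model.
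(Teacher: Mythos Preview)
Your proposal is correct and follows essentially the same approach as the paper's proof. Both directions match: for the ``if'' direction you use a strictly proper scoring rule to elicit each agent's PPD and then justify well-definedness of the aggregator by swapping one agent's hyperparameters at a time via the $\equiv$ relation (the paper does this explicitly for two agents and cites induction; you outline the full $m$-agent chain), and for the ``only if'' direction you build the same two-scenario indistinguishability argument from Lemma~\ref{lem:conf-folk}. Your explicit remark that each swap leaves the residual offset as the sufficient statistic of a genuine multiset---so that $\equiv$ continues to apply---is exactly the content of the starred steps in the paper's chain of equalities, and your side observation that the witnessing multiset $X_0$ must be nonempty is a harmless clarification the paper leaves implicit.
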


\begin{proof}
  We first prove the if direction.
  Let $S$ be the log scoring rule~\eqref{eq:conf-log-score}; then by propriety, the principal elicits
  $p_i = p(x|\nu_0,n_0,X_i) = p(x|\nu_i,n_i)$ for all $i$.  From $p_i$ the principal cannot necessarily compute $(\nu_i,n_i)$, but she can choose some $(\nu_i',n_i')$ reachable from $(\nu_0,n_0)$ such that $p_i = p(x|\nu_i',n_i')$.  We will show that since $(\nu_i,n_i)\equiv(\nu_i',n_i')$, this is enough to optimally aggregate.  We will restrict to the case of two agents; the rest then follows by induction.  Let $\phi(X) = \sum_{x\in X} \phi(x)$; by reachability, we have $X_1'$, $X_2'$ such that $\nu_i' = \nu_0 + \phi(X_i')$ and $n' = n_0 + |X_i'|$.  Thus,
  \begin{align*}
    &  p(x | \nu_0 + \textstyle\sum_i(\nu_i'-\nu_0), n_0 + \textstyle\sum_i(n_i'-n_0) )
    \\ &= p(x | \nu_2' + (\nu_1'-\nu_0), n_2' + (n_1'-n_0) )
    \\ &= p(x | \nu_2' + \phi(X_1'), n_2' + |X_1'| )
    \\ &\overset{*}{=} p(x | \nu_2 + \phi(X_1'), n_2 + |X_1'| )
    \\ &= p(x | \nu_2 + (\nu_1'-\nu_0), n_2 + (n_1'-n_0) )
    \\ &= p(x | \nu_1' + (\nu_2-\nu_0), n_1' + (n_2-n_0) )
    \\ &= p(x | \nu_1' + \phi(X_2), n_1' + |X_2| )
    \\ &\overset{*}{=} p(x | \nu_1 + \phi(X_2), n_1 + |X_2| )
    \\ &= p(x | \nu_1 + (\nu_2-\nu_0), n_1 + (n_2-n_0) )
\\ &= p(x | \nu_0 + \textstyle\sum_i(\nu_i-\nu_0), n_0 + \textstyle\sum_i(n_i-n_0) )~,
  \end{align*}
  which is the global PPD.  The starred equations used the fact that $(\nu_i,n_i)\equiv(\nu_i',n_i')$.

  For the only-if direction, assume that there are 
  $X,X'$ such that for $\nu = \nu_0 + \phi(X)$ and $\nu' = \nu_0 + \phi(X')$, we have $p(x|\nu,n) = p(x|\nu',n')$ but $(\nu,n)\not\equiv(\nu',n')$.  Then we have some multiset $X_1$ of $\X$ such that $p(x|\nu,n,X_1) \neq p(x|\nu',n',X_1)$.  Now let agent 1 receive $X_1$, and consider two worlds, one in which $X_2 = X$ and the other in which $X_2 = X'$.  By Lemma~\ref{lem:conf-folk}, without loss of generality, the principal uses $S$ to elicit the PPD from both agents.  However, she cannot distinguish between these two worlds, as by assumption agent 2's PPD is the same in both.  Unfortunately, the global PPDs in these two situations are different:
  \begin{align*}
    p(x | \nu_0,n_0,X_1\uplus X) &= p(x | \nu,n,X_1)
    \\ &\neq p(x | \nu',n',X_1)
    \\ &= p(x | \nu_0,n_0,X_1\uplus X')~.
  \end{align*}
Hence, the principal is unable to optimally aggregate.
\end{proof}

An important corollary of Theorem~\ref{thm:conf-unique-pred}, which we will make extensive use of below, is that the principal can always optimally aggregate if the PPD gives her full information about the hyperparameters.
\begin{corollary}
  \label{cor:conf-unique}
  If the map $\varphi:(\nu,n)\mapsto p(x|\nu,n)$ is injective, the principal can optimally aggregate.
\end{corollary}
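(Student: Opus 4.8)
The plan is to read this off directly from Theorem~\ref{thm:conf-unique-pred}. The key observation is that injectivity of $\varphi$ is (more than) enough to satisfy the hypothesis of the ``if'' direction of that theorem. Concretely, I would argue: let $(\nu,n)$ and $(\nu',n')$ be any pair of hyperparameters reachable from $(\nu_0,n_0)$ with $p(x|\nu,n) = p(x|\nu',n')$, i.e. $\varphi(\nu,n) = \varphi(\nu',n')$. Since $\varphi$ is injective on its whole domain, in particular it is injective on the reachable pairs, so $(\nu,n) = (\nu',n')$. As the relation $\equiv$ of Definition~\ref{def:conf-reachable} is reflexive (for identical hyperparameters we trivially have $p(x|\nu,n,X) = p(x|\nu',n',X)$ for every multiset $X$), we conclude $(\nu,n)\equiv(\nu',n')$. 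This is exactly the condition required by Theorem~\ref{thm:conf-unique-pred}, so a mechanism $S$ achieving optimal aggregation exists.

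For concreteness I would also spell out the mechanism that the ``if'' direction produces in this special case, since injectivity makes it especially transparent: take $\R = \Delta_\X$ and let $S$ be the logarithmic scoring rule~\eqref{eq:conf-log-score}. By propriety each agent $i$ reports his PPD $p_i = p(x|\nu_i,n_i)$, and because $\varphi$ is injective the principal recovers the hyperparameters exactly as $(\nu_i,n_i) = \varphi^{-1}(p_i)$. She then forms $N_i = n_i - n_0$ and $\sum_{x\in X^i}\phi(x) = \nu_i - \nu_0$, aggregates using identities~\eqref{eq:conf-agg-nu} and~\eqref{eq:conf-agg-n}, and substitutes into~\eqref{eq:conf-opt-agg} to obtain the global PPD $p(x|X)$, as required by Definition~\ref{def:conf-mech-opt-agg}.

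There is no real obstacle here, which is why this is stated as a corollary rather than a theorem; the only point deserving a word of care is that Theorem~\ref{thm:conf-unique-pred} only asks for the implication on pairs reachable from $(\nu_0,n_0)$, whereas we are assuming injectivity of $\varphi$ on its entire domain $\{(\nu,n) : \nu\in\reals^k,\, n\in\reals_+\}$ --- but this is a stronger assumption, so the restriction to reachable pairs is immediate. If one wanted a sharper statement one could instead assume injectivity only on the reachable set, and the same argument goes through verbatim.
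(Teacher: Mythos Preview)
Your proposal is correct and follows essentially the same approach as the paper: verify the hypothesis of Theorem~\ref{thm:conf-unique-pred} by noting that injectivity of $\varphi$ gives $(\nu,n)=(\nu',n')$ whenever the PPDs agree, whence $(\nu,n)\equiv(\nu',n')$ by reflexivity, and then observe that a strictly proper scoring rule elicits the PPD and $\varphi^{-1}$ recovers the hyperparameters. The paper's version is slightly terser and notes that \emph{any} strictly proper scoring rule suffices rather than specifically the log score, but the argument is otherwise identical.
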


\begin{proof}
  By injectivity, $p(x|\nu,n) = p(x|\nu',n')$ implies $(\nu,n) = (\nu',n')$, and $\equiv$ is an equivalence relation.  Moreover, any strictly proper scoring rule $S$ suffices as the mechanism, as this will elicit the PPD $p$, and then the principal can compute $(\nu,n) = \varphi^{-1}(p)$.
\end{proof}

In the following, we provide several examples illustrating the utility of Theorem~\ref{thm:conf-unique-pred}, and Corollary~\ref{cor:conf-unique} in particular.  Before continuing, however, we would like to remark on some practical consideratons.  Strictly speaking, the mechanism given by Corollary~\ref{cor:conf-unique}, which elicits the PPD and inverts the map $\varphi$, suffices when the modeling assumptions are all correct.  However, in the case where the model is slightly off, be it in our conditional independence assumption, the core family $p(x|\theta)$, or even the particular choice of prior, this approach appears to provide no guarantees.  In the examples that follow, we seek not only to elicit the hyperparameters of the PPD, but to do so using scoring rules which provide meaningful information about the PPD \emph{regardless of its form}.  For example, we show below how to elicit the PPD for the Poisson distribution with a Gamma prior using a scoring rule for the first and second moment (or equivalently, the mean and variance).  This scoring rule has the property that it will elicit the correct moments of \emph{any} distribution, and thus if the agents' PPD does not have the assumed form, a practitioner would still have meaningful information about the agent's belief for a variety of approximate aggregation techniques.

\vspace{-5pt}\paragraph{Poisson}

Imagine that a citizen science project such as eBird~\cite{sullivan2009ebird} wishes to collect observations about sightings of various birds to deduce bird migration patterns.  Such a project may wish users to report the number of birds of a particular species seen per minute.  Of course, to combine such estimates, eBird would like to know not only the observed rate, but how long the user spend bird watching, so that it may weigh more highly reports from longer time intervals; this is precisely what our approach offers.

For situations such as this one which involve counting events in a specified time interval, the Poisson distribution is a common choice.  The parameter of the Poisson distribution is $\lambda\in\reals$, the \emph{rate} parameter, and the probability of observing $x\in\{0,1,2,\ldots\}$ events in a unit time interval is given by $p(x|\lambda) = \lambda^x e^{-\lambda} / x!$.  
The canonical conjugate prior for the Poisson distribution is the Gamma distribution, given by $p(\lambda|\nu,n) = \frac{n^{\nu}}{\Gamma(\nu)} \lambda^{\nu-1} e^{-n\,\lambda}$, and the statistic is $\phi(x) = x$.  The form of the PPD $p(x|\nu,n)$ is also a familiar distribution, in the negative binomial family~\cite[p.44]{gelman2013bayesian}.

As mentioned above, we will show how to compute the hyperparameters $\nu$ and $n$ of the PPD from its first two moments $\mu_1$ and $\mu_2$.  As the form of the PPD is known to be negative binomial, one can easily calculate or look up what these moments are in terms of the hyperparameters: $\mu_1 = \nu/n$ and $\mu_2 = \nu (\nu + n + 1)/n^2$.  Fortunately, given these equations, we can simply solve for the hyperparameters in terms of the moments, which we can elicit robustly: $n = \mu_1/(\mu_2+\mu_1^2+\mu_1)$ and $\nu = n \mu_1$.  This already verifies the injectivity condition of Corollary~\ref{cor:conf-unique}, so we know that optimal aggregation is possible.

For concreteness, let us return to the bird watching example to show how eBird might reward users in such a way as to truthfully obtain predictions and then compute their optimal aggregation.  The protocol would be for eBird to announce that a representative will be sent tomorrow to count the number $x$ of birds seen in a minute, and to ask each user $i$ for a prediction $r_{i,1}$ about $\E [x]$ and $r_{i,2}$ about $\E[x^2]$, with the understanding that after the count $x$ is revealed, agent $i$ will receive a reward (cf.~\eqref{eq:brier-moments}) of
\begin{equation}
  \label{eq:conf-poisson-score}
  S(r_{i,1},r_{i,2},x) = 2r_{i,1}x - r_{i,1}^2 + 2r_{i,2}x^2 - r_{i,2}^2~.
\end{equation}
With the reports in hand, eBird can compute $n_i = r_{i,1}/(r_{i,2}+r_{i,1}^2+r_{i,1})$ and $\nu_i = n_i r_{i,1}$.  Assuming the common prior parameters $(\nu_0,n_0)$ are known, eBird simply aggregates these reports to $n = n_0 + \sum_{i=1}^m (n_i - n_0)$ and $\nu = \nu_0 + \sum_{i=1}^m (\nu_i - \nu_0)$, arriving at the global PPD $p(x|X) = p(x|\nu,n)$.

\vspace{-5pt}\paragraph{Normal}  As we saw in Section~\ref{sec:conf-model}, the Normal distribution with known variance but unknown mean allows for optimal aggregation.  This follows easily from Corollary~\ref{cor:conf-unique} as well, since $\N(\mu,\sigma^2)$ is a different distribution for each setting of $\mu,\sigma$.

\vspace{-5pt}\paragraph{Uniform}
\raf{Later: cite Bernardo and Smith?}

Perhaps the most natural of distributions is the uniform distribution on $[0,\theta]$, where $p(x|\theta) = 1/\theta$ in that interval.  As a simple application, consider the problem of determining the number of raffle tickets sold at a fair by asking random people what their ticket number is.  It is well-known that the Pareto distribution is a conjugate prior for this case, and the hyperparameter update is $\nu = \max(\nu_0,x)$ and $n = n_0 + 1$.  Observe that the hyperparameter update is not linear, so we cannot simply apply Corollary~\ref{cor:conf-unique}.  However, it is easy to see that the conclusion still holds here, as the principal can easily aggregate $\{(\nu_i,n_i)\}_i^m$ by taking $\nu = \max\{\nu_i\}_{i=0}^m$ and $n = n_0 + \sum_i(n_i-n_0)$ as usual.

By a simple calculation, one can show that the PPD in this case is a mixture of a uniform distribution and a Pareto distribution, from which one can compute the moments $\mu_1 = n\nu/2(n-1)$ and $\mu_2 = n\nu^2/3(n-2)$.  Cancelling $\nu$, these equations give a quadratic equation with a unique root $n$ satisfying $n>2$ (a requirement of the prior), from which $\nu$ can also be calculated.  Thus, the principal can achieve optimal aggregation in this case as well.


\section{The Non-Unique Case}
\label{sec:conf-non-unique-case}

Imagine a setting where the principal wants to aggregate information from agents to estimate the bias of a coin.  The principal asks agents Bob and Carol, who each see some unknown number of coin flips, after which Bob reports that the coin is unbiased, whereas Carol reports that it is biased 10-to-1 toward Heads.  With only this information, which corresponds to the full PPDs of both agents, it is easily seen to be \emph{impossible} to optimally aggregate these reports, as it is unclear how many flips each agent saw.  Even if the principal knows that Carol saw 20 flips, she cannot tell whether Bob saw none and just reported the prior, or whether he saw 1000 and is practically certain of the bias of the coin.  (Formally, we can explain this by noting that the conjugate prior is the Beta distribution, which does not satisfy Theorem~\ref{thm:conf-unique-pred}.)  How can the principal circumvent this impossibility to still achieve optimal aggregation in this setting?

In this section we will consider a more general version of the coin flip example, using the \emph{categorical} family of distributions, i.e., the whole of $\Delta_\X$ for $\X = [K] = \{1,2,\ldots,K\}$.  Here the common conjugate prior is the Dirichlet distribution $p(\theta|\alpha)$, whose hyperparameters $\alpha \in \reals^K$ encode \emph{pseudo-counts}, so that $\alpha_i$ corresponds to the number of occurrences of outcome $i$ an agent has seen.  More formally, we take $\Theta = \Delta_\X = \Delta_K$, and for $\alpha \in \reals^K$ we let
\begin{multline}
  \label{eq:conf-dirichlet}
  p(i|\theta) = \theta_i\,,\quad
  p(\theta|\alpha) = \frac{\Gamma(n)}{\prod_{i=1}^K \Gamma(\alpha_i)} \prod_{i=1}^K \theta_i^{\alpha_i - 1}~,
\end{multline}
where $n = \sum_{i=1}^K \alpha_i$ corresponds to the total number of (pseudo-) samples observed, and $\Gamma$ is the Gamma distribution.\footnote{Note that we have departed from our $(\nu,n)$ notation to match the convention for the Dirichlet distribution; otherwise we could take $\nu$ to be the first $K-1$ coordinates of $\alpha$, and keep $n$ the same.} It is well-known that the mean of the Dirichlet distribution  is $\E[\theta|\alpha] = \alpha/n$, which is just a normalized version of the pseudo-counts.  Taken as an element of $\Delta_\X$, this is also the PPD: if an agent sees $x\!=\!1$ and $x\!=\!2$ each eight times and $x\!=\!3$ four times, then $\alpha=(8,8,4)$ and his PPD will be $(2/5,2/5,1/5)$.
We can see now why Theorem~\ref{thm:conf-unique-pred} tells us that optimal aggregation is impossible: scaling $\alpha$ by any positive amount yields the same PPD, just as with the coin flip example above, but when aggregating $\alpha$'s from multiple agents, different relative scales yield different global PPDs.

Fortunately, despite this impossibility, we now show that if the principal can simply obtain \emph{two} of her own samples, she can use them both to glean second-order information from the agents, and then optimally aggregate.  The idea behind the mechanism is extremely simple: ask the agent for the distribution $p$ of the first sample, and the probability $b$ that the two samples are the same.  As discussed above, the reported $p$ gives $\alpha/n$, and it turns out that the scaling factor $n$, which corresponds to the confidence of the agent, can be expressed as a simple formula of $p$ and $b$.

\begin{theorem}
  \label{thm:conf-2-samples}
  Let $\X = [K]$, and let $\{p(i|\theta)\}$ and $\{p(\theta|\alpha)\}$ be the categorical and Dirichlet families from eq.~\eqref{eq:conf-dirichlet}.  Then given two independent samples $x_1,x_2\in\X$, the mechanism $S:\Delta_\X\times[0,1]\times\X\times\X\to\reals$ defined by
  \begin{equation}
    \label{eq:conf-1}
    S(p,b,x_1,x_2) = \log p(x_1) + 2 b\cdot\ones\{x_1\!=\!x_2\} - b^2
  \end{equation}
  achieves optimal aggregation.
\end{theorem}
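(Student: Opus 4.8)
The plan is to show that the mechanism $S$ is strictly proper, so that the principal learns from each agent $i$ both the agent's PPD and the probability that two fresh samples collide, and then to argue that these two numbers determine the agent's hyperparameters $(\alpha_i, n_i)$. Once the $(\alpha_i,n_i)$ are in hand, the conjugate-prior aggregation identities \eqref{eq:conf-agg-nu}--\eqref{eq:conf-agg-n} finish the job, so the aggregation function $g$ of Definition~\ref{def:conf-mech-opt-agg} exists.

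First I would establish propriety. The key observation is that $S$ is additively separable: $S(p,b,x_1,x_2) = \log p(x_1) + \bigl(2b\,\ones\{x_1\!=\!x_2\} - b^2\bigr)$, where the first summand depends only on $(p,x_1)$ and the second only on $b$ and the event $\{x_1\!=\!x_2\}$. Hence an agent's expected score splits into a term depending only on $p$ and the agent's marginal belief about $x_1$, plus a term depending only on $b$ and the agent's believed probability of the event $\{x_1\!=\!x_2\}$, and each can be maximized separately. The first term is the logarithmic score, which is strictly proper, so it is uniquely maximized by reporting $p$ equal to the marginal PPD, i.e.\ $p(j) = (\alpha_i)_j/n_i$ with $n_i = \sum_j (\alpha_i)_j$. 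The second term is a Brier score (cf.~\eqref{eq:brier-mean}) for the Bernoulli random variable $\ones\{x_1\!=\!x_2\}$, so it is uniquely maximized by reporting $b$ equal to the agent's believed collision probability. The modeling point to emphasize here is that, from the agent's viewpoint, $x_1$ and $x_2$ are exchangeable but \emph{not} independent: they are coupled through the unknown $\theta$, and it is precisely this coupling that makes $b$ informative about confidence. Concretely, if the agent's posterior on $\theta$ is $\mathrm{Dir}(\alpha_i)$, then the reported $b$ equals $\Pr[x_1\!=\!x_2] = \sum_j \E_{\theta\sim\mathrm{Dir}(\alpha_i)}[\theta_j^2]$.

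Next I would evaluate this collision probability in closed form using the standard Dirichlet second moments, $\E[\theta_j^2] = \alpha_j(\alpha_j+1)/\bigl(n(n+1)\bigr)$ with $n=\sum_j\alpha_j$. Summing over $j$ and substituting $\alpha_j = n\,p(j)$ (obtained from the elicited PPD), routine simplification gives $b = \bigl(1 + n\sum_j p(j)^2\bigr)/(n+1)$, which inverts to the clean formula $n = (1-b)\big/\bigl(b - \sum_j p(j)^2\bigr)$. This is the step where the second sample pays off: the first sample pins down the direction $\alpha/n = p$, and the collision probability recovers the scale $n$, hence $\alpha = n\,p$. To see the inversion is well-defined I would invoke strict convexity of $t\mapsto t^2$: $b = \E[\sum_j \theta_j^2] = \E[\|\theta\|^2] > \|\E[\theta]\|^2 = \sum_j p(j)^2$ whenever the agent's posterior on $\theta$ is non-degenerate, which holds for any finite $n$, so the denominator is strictly positive.

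Finally, having recovered $(\alpha_i,n_i)$ for every agent, the principal aggregates exactly as in the conjugate-prior framework: the categorical family has sufficient statistic $\phi(x)=e_x$ and additive updates in $\alpha$, so she sets $\alpha = \alpha_0 + \sum_{i=1}^m(\alpha_i - \alpha_0)$ and outputs $p(x\mid\alpha) = \alpha/\sum_j\alpha_j$, which by \eqref{eq:conf-opt-agg} is exactly the global PPD $p(x\mid X)$. I expect the only genuine subtleties to be (i) the joint-propriety argument --- verifying that eliciting $p$ and $b$ simultaneously preserves incentives, which works because $S$ is additively separable and each summand's expectation depends only on its own report --- and (ii) the non-degeneracy observation that keeps the inversion from dividing by zero; the Dirichlet moment computation itself is routine, and the aggregation step is immediate from the earlier development.
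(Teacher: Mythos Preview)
Your proposal is correct and follows essentially the same route as the paper: elicit the PPD $p=\alpha/n$ via the log score, elicit the collision probability $b=\E_{\theta\sim\mathrm{Dir}(\alpha)}\bigl[\sum_j\theta_j^2\bigr]$ via the Brier score, compute the Dirichlet second moment to obtain $b=(1+n\|p\|^2)/(n+1)$, invert to get $n=(1-b)/(b-\|p\|^2)$ and hence $\alpha=np$, and then aggregate via \eqref{eq:conf-agg-nu}--\eqref{eq:conf-agg-n}. Your treatment is in fact slightly more careful than the paper's, since you make explicit the additive-separability argument for joint propriety and verify that the denominator $b-\|p\|^2$ is strictly positive for any non-degenerate Dirichlet posterior.
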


\begin{proof}
  Focusing first on a single agent, by propriety of the log scoring rule, the agent will report $p = p(\,\cdot\,|\alpha) = \alpha/n$, where once again $n = \sum_{i=1}^K \alpha_i$.
  Similarly, by propriety of the Brier score, the agent will report his belief about the probability that $x_1=x_2$.  We can calculate this easily:
  \begin{align*}
    b &= \mathrm{Pr}[x_1=x_2]\\
    &= \E_{\theta\sim p(\theta|\alpha)}\left[\textstyle\sum\nolimits_{i=1}^K p(x_1=i,x_2=i\,|\theta)\right] \\
    &= \E_{\theta\sim p(\theta|\alpha)}\left[\textstyle\sum\nolimits_{i=1}^K p(x_1=i\,|\theta) p(x_2=i\,|\theta)\right] \\
    &= \E_{\theta\sim p(\theta|\alpha)}\left[\textstyle\sum\nolimits_{i=1}^K \theta_i \theta_i\right] 
    = \textstyle\sum\nolimits_{i=1}^K \Var[\theta_i|\alpha] + \E[\theta_i|\alpha]^2~.
  \end{align*}
  It is known that $\Var[\theta_i|\alpha] = \frac{\alpha_i(n - \alpha_i)}{n^2(n+1)}$, so the first term becomes
  \begin{equation*}
    \sum\nolimits_i \Var[\theta_i|\alpha] = \frac{(\sum_i \alpha_i)n - \sum_i\alpha_i^2}{n^2(n+1)} = \frac{1-\|p\|^2}{n+1},
  \end{equation*}
  as we also have $\sum_i \E[\theta_i|\alpha]^2 = \|p\|^2 = \|\alpha\|^2/n^2$.  Putting this together, we have $b = \frac{1-\|p\|^2}{n+1} - \|p\|^2$, so $n = \frac{1-b}{b-\|p\|^2}$ and finally $\alpha = n p$.  Finally, turning  to the aggregation of multiple predictions, the result follows by the same argument as in Theorem~\ref{thm:conf-unique-pred}: we simply discount the prior from each agent's report and sum.
\end{proof}

Returning to the coin flip example, we can now see how the principal can resolve the dilemma from before.  Instead of simply asking the probability that a single flip is Heads, the principal should obtain two independent flips and then ask the agents for the probability that the first is Heads, and the probability that the two flips are the same.  By Theorem~\ref{thm:conf-2-samples}, the answers to these two intuitive questions give the principal enough information to optimally aggregate.

\raf{Not sure it is worth going into this: For this specific binary outcome example, it is interesting to see how the aggregation works explicitly.  Here the distribution is Bernoulli and the prior is Beta...}
\ian{I think it would be worth it if we have space and time (or in the appendix if time but not space).  Otherwise this seems a bit like magic that everything works out.} \yc{I think it's worth going into this too.}
\raf{Decided not to; not enough time or space, plus it's really the same proof as above just with only two outcomes.}

\section{Future Work}
\label{sec:conf-conclusion}

A well known and broad class of distributions with conjugate priors are the exponential families (see Appendix~\ref{sec:conf-exponential-families} for a primer).  Many of the examples discussed in this paper fall into the exponential families, and thus it is a natural question to ask whether our results can be shown to hold for all such distributions.
In particular, our study opens two interesting questions, which under the surface would imply some interesting structure of exponential families.

The first follows naturally from Theorem~\ref{thm:conf-unique-pred} and the examples in Section~\ref{sec:conf-uniq-pred-distr}, several of which are exponential families, and all of which admit optimal aggregation.  We conjecture that for exponential families, the success of a single-sample mechanism depends only on the dimension $k$ of the statistic $\phi$.

\begin{conjecture}
  \label{con:conf-exp-fam-single}
  Optimal aggregation with a single sample is possible for an exponential family if and only if $|\X| > \dim \phi + 1$.
\end{conjecture}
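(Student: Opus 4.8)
The plan is to recast the conjecture as a statement about injectivity of the map $\varphi:(\nu,n)\mapsto p(x\mid\nu,n)$ from the $(k{+}1)$-dimensional hyperparameter space $\reals^k\times\reals_+$ into the $(|\X|{-}1)$-dimensional simplex $\Delta_\X$, where $k=\dim\phi$. By Corollary~\ref{cor:conf-unique}, injectivity of $\varphi$ already implies optimal aggregation; conversely, I would first argue that for a minimal exponential family the relation $\equiv$ of Definition~\ref{def:conf-reachable} collapses to equality --- the conjugate prior $p(\theta\mid\nu,n)\propto\exp(\langle\nu,\theta\rangle-nA(\theta))$ is itself a minimal exponential family in $(\nu,n)$ with statistic $(\theta,-A(\theta))$, so $(\nu,n)\mapsto p(\theta\mid\nu,n)$ is injective, and distinct posteriors are separated by a reachable continuation (already by $\emptyset$, by identifiability of exponential-family mixtures) --- after which Theorem~\ref{thm:conf-unique-pred} says ``aggregation is possible'' is equivalent to ``$\varphi$ is injective on the reachable set.'' The cleanest and most robust formalization (cf.\ the remark after Corollary~\ref{cor:conf-unique}) is therefore injectivity of $\varphi$ itself, and I would prove $\varphi$ is injective $\iff |\X|\ge k+2$, noting that the reachable-set version of the ``only if'' additionally needs a mild genericity hypothesis on the prior to guarantee an actual colliding pair among the countably many reachable hyperparameters.

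For the only-if direction, suppose $|\X|\le k+1$. The crux is a dimension count: $\varphi$ is real-analytic (the conjugate normalizer is analytic on the interior of its domain) and maps a connected $(k{+}1)$-manifold into a space of dimension $|\X|-1\le k$, so it cannot be injective --- a $(k{+}1)$-ball does not embed in $\reals^k$. I would make the obstruction explicit in the boundary case $|\X|=k+1$, which is the only one arising for a minimal family: then $\phi(\X)$ is a set of $k+1$ affinely independent points, so $p(x\mid\nu,n)$ is the unique solution of $\sum_x p(x)\phi(x)=\nu/n$, $\sum_x p(x)=1$; hence $\varphi$ factors through $(\nu,n)\mapsto\nu/n$ and collapses each ray $\{(t\nu,tn):t>0\}$ to a point. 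This is precisely the Dirichlet ``rescale the pseudocounts'' obstruction of Section~\ref{sec:conf-non-unique-case}, and distinct hyperparameters on a common ray have distinct posteriors on $\theta$, hence are not $\equiv$; so, provided the prior is non-degenerate enough that some two reachable hyperparameters share a ray (true, e.g., whenever $(\nu_0,n_0)$ is rational, by appending a suitable multiset), Theorem~\ref{thm:conf-unique-pred} yields impossibility.

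For the if direction, assume $|\X|\ge k+2$; by Corollary~\ref{cor:conf-unique} it suffices to show $\varphi$ is injective. First recover the mean parameter: by the Diaconis--Ylvisaker identity, $\E_{p(x\mid\nu,n)}[\phi(x)]=\E_{p(\theta\mid\nu,n)}[\nabla A(\theta)]=\nu/n$, so a collision $\varphi(\nu,n)=\varphi(\nu',n')$ forces $\nu=n\mu$ and $\nu'=n'\mu$ for a common $\mu$, reducing the problem to injectivity of $n\mapsto p(x\mid n\mu,n)$ on $\reals_+$. Along this ray the posterior $\pi_n:=p(\theta\mid n\mu,n)$ is, up to normalization, the exponential-tilting path $\rho(\theta)^n$ with $\rho(\theta)=\exp(\langle\mu,\theta\rangle-A(\theta))$, and the PPD is the $\rho^n$-mixture of $\{p(x\mid\theta)\}$. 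The point of $|\X|\ge k+2$ is that the PPD then records strictly more than the $k$ numbers $\mu$; I would track a scalar second-order functional of it, e.g.\ $\sum_x p(x\mid n\mu,n)\,\langle\phi(x),\phi(x)\rangle=\mathrm{tr}\,\E_{\pi_n}[\nabla^2A(\theta)]+\mathrm{tr}\,\mathrm{Cov}_{\pi_n}[\nabla A(\theta)]$, and show it is strictly monotone in $n$ once $\mu$ is fixed, using that $\pi_n$ concentrates monotonically (its spread functionals strictly decrease in $n$). Monotonicity gives injectivity of $\varphi$, and then any strictly proper scoring rule achieves optimal aggregation.

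The main obstacle is exactly this last monotonicity claim in full generality. The heuristics are clean when $\Theta$ is bounded, but for arbitrary $\Theta$ one has to exclude the chosen functional being non-monotone yet still injective, or losing the ability to separate $n$ from $n'$ near the boundary of the parameter space, and one has to pin down why $k+2$ outcomes already suffice rather than strictly more. A natural alternative is a pure convexity argument: writing $\log p(x\mid\nu,n)=\log h(x)+B(\nu+\phi(x),n+1)-B(\nu,n)$ with $B$ the strictly convex log-normalizer of the conjugate family, a collision forces $\psi\mapsto B(\nu+\psi,n+1)-B(\nu'+\psi,n'+1)$ to be constant on the $\ge k+2$ points $\phi(\X)$, and one would try to contradict this using strict convexity of $B$ together with the mean constraint $\nu/n=\nu'/n'$. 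Making either route go through for every minimal exponential family is the technical heart of the conjecture.
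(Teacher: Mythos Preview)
This statement is labeled a \emph{conjecture} in the paper, and the paper does not prove it; Appendix~\ref{sec:conf-conjectures} offers only heuristic intuition. Your proposal is therefore best read as a proof \emph{strategy}, and as such it tracks the paper's intuition very closely. Both you and the paper (i) reduce the question to injectivity of $\varphi:(\nu,n)\mapsto p(x\mid\nu,n)$ via Corollary~\ref{cor:conf-unique} and Theorem~\ref{thm:conf-unique-pred}; (ii) invoke the Diaconis--Ylvisaker credible-mean identity to restrict attention to a single ray $\{(n\mu,n):n>0\}$; (iii) observe that when $|\X|=k+1$ the mean $\nu/n$ already pins down the PPD, so $\varphi$ collapses each ray and injectivity fails; and (iv) identify the remaining task, for $|\X|\ge k+2$, as showing some functional of the PPD is strictly monotone in $n$ along the ray, driven by concentration of the posterior. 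The paper suggests $\mathrm{KL}\bigl(p(x\mid n\mu,n)\,;\,p(x\mid\hat\theta)\bigr)$; you instead propose tracking a second-moment trace or, alternatively, a convexity argument through the conjugate log-normalizer $B$. These are reasonable refinements of the same program, and you are candid that the monotonicity step is exactly where the argument is incomplete --- which is precisely where the paper leaves it as well.

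One genuine slip: your claim that $\equiv$ collapses to equality ``by identifiability of exponential-family mixtures'' is not correct as stated. Identifiability of mixtures $\int p(x\mid\theta)\pi(\theta)\,d\theta$ from the PPD alone fails exactly in the regime you are analyzing: in the categorical/Dirichlet case with $|\X|=k+1$, distinct posteriors $p(\theta\mid\alpha)$ and $p(\theta\mid c\alpha)$ yield the \emph{same} PPD $\alpha/n$, so $X=\emptyset$ does not separate them. What is true (and what you implicitly use later) is that such pairs are separated by a \emph{nonempty} continuation $X$, which is enough to conclude they are not $\equiv$; but that has to be argued directly, not via mixture identifiability. This does not break your overall outline, since your only-if direction ultimately relies on exhibiting a separating $X$ rather than on the collapsed-$\equiv$ claim, but the sentence as written should be removed or repaired.
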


The second open question is similar: does the two-sample technique from Section~\ref{sec:conf-non-unique-case} succeed for all exponential families?  Again, we conjecture positively.

\begin{conjecture}
  \label{con:conf-exp-fam-double}
  Given an exponential family with statistic $\phi$, the mechanism which elicits the expected values of $\phi(x_1)$ and $\phi(x_1)\phi(x_2)^\tr$ can optimally aggregate.
\end{conjecture}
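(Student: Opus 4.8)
The plan is to reduce the conjecture, via the structure theory of exponential families and their conjugate priors, to a single monotonicity statement about the conjugate family. Write the family in natural form, $p(x|\theta) = h(x)\exp\!\big(\eta(\theta)^\tr\phi(x) - A(\eta(\theta))\big)$, so that the conjugate prior of Definition~\ref{def:conf-conj-prior} is the Diaconis--Ylvisaker prior $p(\theta|\nu,n)\propto\exp\!\big(\nu^\tr\eta(\theta) - nA(\eta(\theta))\big)$, and let $g(\theta) = \E_{x\sim p(x|\theta)}[\phi(x)] = \nabla A(\eta(\theta))$ be the mean map. Since the two samples $x_1,x_2$ are conditionally i.i.d.\ given $\theta$ and $\theta$ is drawn from the agent's posterior $p(\theta|\nu,n)$, Lemma~\ref{lem:conf-folk} together with propriety of a Brier-type scoring rule on the entries of $\phi$ lets the principal recover
\[
\mu_1 := \E[\phi(x_1)] = \E_\theta[g(\theta)],\qquad M_2 := \E[\phi(x_1)\phi(x_2)^\tr] = \E_\theta\!\big[g(\theta)g(\theta)^\tr\big],
\]
where the second equality uses conditional independence. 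Hence $C := M_2 - \mu_1\mu_1^\tr = \mathrm{Cov}_\theta[g(\theta)]$ is the covariance of the posterior mean-parameter.

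Next I would establish two exact identities by integrating total derivatives of the prior density (the Diaconis--Ylvisaker calculus). From $\int\nabla_\eta\exp(\nu^\tr\eta - nA(\eta))\,d\eta = 0$ one gets $\mu_1 = \E_\theta[\nabla A(\eta)] = \nu/n$; thus the first-order report pins down only the \emph{direction} of the hyperparameters, $\nu = n\mu_1$, not the scale $n$ --- precisely the obstruction isolated in Theorem~\ref{thm:conf-unique-pred}. Taking a second total derivative, $\int \nabla_\eta\otimes\nabla_\eta\exp(\nu^\tr\eta - nA(\eta))\,d\eta = 0$, and simplifying with $\mu_1 = \nu/n$ yields $n\,C = \E_\theta[\nabla^2 A(\eta)]$. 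Since $\nabla^2 A(\eta) = \mathrm{Cov}[\phi(x)|\theta]$ is the variance function of the family, the law of total variance for a single draw from the PPD gives
\[
\mathrm{Cov}[\phi(x_1)] = \E_\theta\!\big[\mathrm{Cov}[\phi(x)|\theta]\big] + \mathrm{Cov}_\theta[g(\theta)] = nC + C = (n+1)\,C.
\]
So once the principal can also pin down $\mathrm{Cov}[\phi(x_1)] = \E[\phi(x_1)\phi(x_1)^\tr] - \mu_1\mu_1^\tr$, she recovers $n+1$ as the common entrywise ratio $\mathrm{Cov}[\phi(x_1)]_{ij}/C_{ij}$ over any $(i,j)$ with $C_{ij}\neq 0$, then $\nu = n\mu_1$, and finally aggregates exactly as in Theorems~\ref{thm:conf-unique-pred} and~\ref{thm:conf-2-samples}, discounting the prior from each agent's recovered $(\nu_i,n_i)$ and summing via eqs.~\eqref{eq:conf-agg-nu}--\eqref{eq:conf-agg-n} before plugging into~\eqref{eq:conf-opt-agg}.

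The remaining step is the crux, and I expect it to be the main obstacle. In the categorical case $\phi$ is an indicator, so $\phi(x_1)\phi(x_1)^\tr = \mathrm{diag}(\phi(x_1))$ and $\mathrm{Cov}[\phi(x_1)]$ is a function of $\mu_1$ alone --- which is exactly why Theorem~\ref{thm:conf-2-samples} needs only two reports; similarly, whenever the variance function is affine (e.g.\ Poisson) $\E_\theta[\nabla^2 A(\eta)]$ is affine in $\mu_1$ and one gets $n = \mu_1/C$ directly. For a general exponential family, however, $\E_\theta[\nabla^2 A(\eta)]$ need not be a function of $(\mu_1,C)$, so one of two things must be shown: (i) that the stated mechanism should also elicit $\E[\phi(x_1)\phi(x_1)^\tr]$ --- a legitimate single-sample request, obtainable from $x_1$ alone --- after which the argument above closes unconditionally; or, more in the spirit of the conjecture, (ii) that with $\nu = n\mu_1$ held fixed, the map $n\mapsto \mathrm{Cov}_{\theta\sim p(\theta|n\mu_1,n)}[g(\theta)]$ is strictly decreasing (in the Loewner order, or even just in trace), hence injective, so that $(\mu_1,C)$ already determines $(\nu,n)$. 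Route (ii) matches the intuition that larger $n$ means more pseudo-observations and a more concentrated prior around $\{g(\theta)=\mu_1\}$; proving it amounts to controlling how fast $\E_\theta[\nabla^2 A(\eta)]$ can grow with $n$ (it should itself be decreasing, by that concentration), which I would attack either by differentiating in $n$ and showing the derivative is negative semidefinite, or by a likelihood-ratio/coupling monotonicity argument within the conjugate family. Establishing this monotonicity uniformly over all exponential families and all attainable $\mu_1$ is, I expect, what resolving the conjecture requires.
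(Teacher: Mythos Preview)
This statement is a \emph{conjecture} in the paper; the paper does not prove it but only offers the heuristic in Appendix~\ref{sec:conf-conjectures}. Your plan follows that heuristic closely: the paper likewise notes that the first report yields $\mu_1=\nu/n$ by the credible-mean property (Theorem~\ref{thm:conf-credible-mean}), that conditional independence turns the cross-moment report into $C=\mathrm{Cov}_\theta[g(\theta)]$, and that what remains is to show the map $n\mapsto C$ is injective (intuitively, strictly decreasing) along the ray $\nu=n\mu_1$. Your route~(ii) is exactly this, and you are right to flag it as the open crux.

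Where you go beyond the paper is the second Diaconis--Ylvisaker identity $nC=\E_\theta[\nabla^2 A(\eta)]$ and its law-of-total-variance consequence $\mathrm{Cov}[\phi(x_1)]=(n+1)C$. These are correct under the usual DY regularity (vanishing boundary terms in the integration by parts), and they make the categorical and affine-variance-function special cases drop out cleanly. Two caveats, however. First, route~(i)---augmenting the mechanism to also elicit $\E[\phi(x_1)\phi(x_1)^\tr]$---would not prove the conjecture \emph{as stated}, since the conjecture fixes the mechanism; it would instead establish a (possibly more useful) variant. Second, your identity rewrites the target as $C=\tfrac{1}{n}\E_\theta[\nabla^2 A(\eta)]$, but because the expectation on the right itself depends on $n$ through the posterior, this reformulation does not by itself deliver monotonicity; a concentration or coupling argument of the kind you sketch is still required. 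In short, your proposal is a sound and somewhat sharper version of the paper's own intuition, and it correctly isolates the same unresolved step.
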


The intuition behind these conjectures, which we outline in Appendix~\ref{sec:conf-conjectures}, lies in concentration properties in the posterior distribution $p(x|\nu,n)$ as $n$ increases to infinity.  Because of the simple form of exponential families, and the exponential decay inherent in their definition, we believe that these results can be obtained.

Finally, we would like to mention a possible extension.  While our model assumes that the principal wishes to aggregate \emph{all} information, in reality, agents may have different costs to gather their samples, and the principal may therefore desire to aggregate a more efficient amount of information given this cost.  \citeauthor{fang2007putting} (\citeyear{fang2007putting}) show that this can be done in a restricted setting with Normal distributions.  Can this still be done in our more general setting?  What if agents can acquire different amounts of information at different costs, for example, if a convex function specifies their cost to acquire any number of samples?  We hope to address these and related questions in future work.

\bibliographystyle{aaai}
\bibliography{../../diss,conf,library}

\begin{thebibliography}{}

\bibitem[\protect\citeauthoryear{Berg \bgroup et al\mbox.\egroup
  }{2001}]{Ber:01}
Berg, J.~E.; Forsythe, R.; Nelson, F.~D.; and Rietz, T.~A.
\newblock 2001.
\newblock Results from a dozen years of election futures markets research.
\newblock In Plott, C.~A., and Smith, V., eds., {\em Handbook of Experimental
  Economic Results}.

\bibitem[\protect\citeauthoryear{Brier}{1950}]{Brier:50}
Brier, G.~W.
\newblock 1950.
\newblock Verification of forecasts expressed in terms of probability.
\newblock {\em Monthly Weather Review} 78(1):1--3.

\bibitem[\protect\citeauthoryear{Chen \bgroup et al\mbox.\egroup
  }{2010}]{Chen:10}
Chen, Y.; Dimitrov, S.; Sami, R.; Reeves, D.~M.; Pennock, D.~M.; Hanson, R.~D.;
  Fortnow, L.; and Gonen, R.
\newblock 2010.
\newblock Gaming prediction markets: Equilibrium strategies with a market
  maker.
\newblock {\em Algorithmica} 58(4):930--969.

\bibitem[\protect\citeauthoryear{Chen \bgroup et al\mbox.\egroup
  }{2014}]{Chen:2014c}
Chen, Y.; Devanur, N.~R.; Pennock, D.~M.; and Vaughan, J.~W.
\newblock 2014.
\newblock Removing arbitrage from wagering mechanisms.
\newblock In {\em Proceedings of the Fifteenth ACM Conference on Economics and
  Computation}, EC '14,  377--394.

\bibitem[\protect\citeauthoryear{Diaconis, Ylvisaker, and
  others}{1979}]{diaconis1979conjugate}
Diaconis, P.; Ylvisaker, D.; et~al.
\newblock 1979.
\newblock Conjugate priors for exponential families.
\newblock {\em The Annals of statistics} 7(2):269--281.

\bibitem[\protect\citeauthoryear{Fang, Stinchcombe, and
  Whinston}{2007}]{fang2007putting}
Fang, F.; Stinchcombe, M.; and Whinston, A.
\newblock 2007.
\newblock Putting your money where your mouth is---a betting platform for
  better prediction.
\newblock {\em Review of Network Economics} 6(2).

\bibitem[\protect\citeauthoryear{Fink}{1997}]{fink1997compendium}
Fink, D.
\newblock 1997.
\newblock A compendium of conjugate priors.
\newblock {\em Unpublished}.

\bibitem[\protect\citeauthoryear{Gao, Zhang, and Chen}{2013}]{Gao:2013}
Gao, X.~A.; Zhang, J.; and Chen, Y.
\newblock 2013.
\newblock What you jointly know determines how you act: Strategic interactions
  in prediction markets.
\newblock In {\em ACM Conference on Electronic Commerce}, EC '13,  489--506.

\bibitem[\protect\citeauthoryear{Gelman \bgroup et al\mbox.\egroup
  }{2013}]{gelman2013bayesian}
Gelman, A.; Carlin, J.~B.; Stern, H.~S.; Dunson, D.~B.; Vehtari, A.; and Rubin,
  D.~B.
\newblock 2013.
\newblock {\em Bayesian data analysis}.
\newblock {CRC} press.

\bibitem[\protect\citeauthoryear{Gneiting and
  Raftery}{2007}]{gneiting2007strictly}
Gneiting, T., and Raftery, A.
\newblock 2007.
\newblock Strictly proper scoring rules, prediction, and estimation.
\newblock {\em Journal of the American Statistical Association}
  102(477):359--378.

\bibitem[\protect\citeauthoryear{Hansen, Schmidt, and
  Strobel}{2001}]{hansen-manipulation}
Hansen, J.; Schmidt, C.; and Strobel, M.
\newblock 2001.
\newblock Manipulation in political stock markets --- preconditions and
  evidence.
\newblock Technical Report.

\bibitem[\protect\citeauthoryear{Hanson}{2003}]{Han:03}
Hanson, R.~D.
\newblock 2003.
\newblock Combinatorial information market design.
\newblock {\em Information Systems Frontiers} 5(1):107--119.

\bibitem[\protect\citeauthoryear{Hanson}{2007}]{Han:07}
Hanson, R.~D.
\newblock 2007.
\newblock Logarithmic market scoring rules for modular combinatorial
  information aggregation.
\newblock {\em Journal of Prediction Markets} 1(1):1--15.

\bibitem[\protect\citeauthoryear{Johnstone}{2007}]{Johnstone:2007}
Johnstone, D.~J.
\newblock 2007.
\newblock The parimutuel {K}elly probability scoring rule.
\newblock {\em Decision Analysis} 4(2):66--75.

\bibitem[\protect\citeauthoryear{Kilgour and Gerchak}{2004}]{Kilgour:04}
Kilgour, D.~M., and Gerchak, Y.
\newblock 2004.
\newblock Elicitation of probabilities using competitive scoring rules.
\newblock {\em Decision Analysis} 1(2):108--113.

\bibitem[\protect\citeauthoryear{Lambert \bgroup et al\mbox.\egroup
  }{2008}]{Lambert:08b}
Lambert, N.; Langford, J.; Wortman, J.; Chen, Y.; Reeves, D.~M.; Shoham, Y.;
  and Pennock, D.~M.
\newblock 2008.
\newblock Self-financed wagering mechanisms for forecasting.
\newblock In {\em ACM Conference on Electronic Commerce},  170--179.
\newblock New York, NY, USA: ACM.

\bibitem[\protect\citeauthoryear{Lambert \bgroup et al\mbox.\egroup
  }{2014}]{L+14}
Lambert, N.; Langford, J.; Vaughan, J.~W.; Chen, Y.; Reeves, D.~M.; Shoham, Y.;
  and Pennock, D.~M.
\newblock 2014.
\newblock An axiomatic characterization of wagering mechanisms.
\newblock {\em Journal of Economic Theory}.
\newblock (Forthcoming).

\bibitem[\protect\citeauthoryear{Matheson and Winkler}{1976}]{Matheson:76}
Matheson, J.~E., and Winkler, R.~L.
\newblock 1976.
\newblock Scoring rules for continuous probability distributions.
\newblock {\em Management Science} 22(10):1087--1096.

\bibitem[\protect\citeauthoryear{Savage}{1971}]{Savage:71}
Savage, L.~J.
\newblock 1971.
\newblock Elicitation of personal probabilities and expectations.
\newblock {\em Journal of the American Statistical Association}
  66(336):783--801.

\bibitem[\protect\citeauthoryear{Sullivan \bgroup et al\mbox.\egroup
  }{2009}]{sullivan2009ebird}
Sullivan, B.~L.; Wood, C.~L.; Iliff, M.~J.; Bonney, R.~E.; Fink, D.; and
  Kelling, S.
\newblock 2009.
\newblock ebird: A citizen-based bird observation network in the biological
  sciences.
\newblock {\em Biological Conservation} 142(10):2282--2292.

\bibitem[\protect\citeauthoryear{Wainwright and
  Jordan}{2008}]{wainwright2008graphical}
Wainwright, M.~J., and Jordan, M.~I.
\newblock 2008.
\newblock Graphical models, exponential families, and variational inference.
\newblock {\em Foundations and Trends in Machine Learning} 1(1-2):1--305.

\bibitem[\protect\citeauthoryear{Winkler}{1969}]{Win:69}
Winkler, R.~L.
\newblock 1969.
\newblock Scoring rules and the evaluation of probability assessors.
\newblock {\em Journal of the American Statistical Association}
  64(327):1073--1078.

\bibitem[\protect\citeauthoryear{Wolfers and Zitzewitz}{2004}]{Wol:04}
Wolfers, J., and Zitzewitz, E.
\newblock 2004.
\newblock Prediction markets.
\newblock {\em Journal of Economic Perspective} 18(2):107--126.

\end{thebibliography}

\raf{\subsection*{Acknowledgments}
\raf{Will flesh these out}
David (postdoc on MD 2nd floor), Giri, Matus, David Blei}
\ian{We can omit these from the submission since it is supposed to be double-blind.}

\newpage
\appendix
\section{Exponential families}
\label{sec:conf-exponential-families}

Perhaps the most important class of distributions which admit conjugate priors are the exponential families, a broad class which includes many common distributions such as normal, log-normal, Poisson, and many more.  We briefly review exponential families and their conjugate priors, which as it turns out are themselves exponential families.

Let $\phi:\X\to\reals^k$ be the \emph{sufficient statistic} (a term justified below).  We assume that $\phi$ is \emph{minimal}, meaning $\inprod{\theta,\phi(x)}$ cannot be a constant function of $x$ for any $\theta\neq 0$.  (Minimality is thus equivalent to affine independence.) Now define
\begin{align}
  g(\theta) &= \log\int_\X \exp\{\inprod{\phi(x),\theta}\} dx
  \\
  p(x|\theta) &= \exp\{\inprod{\phi(x),\theta} - g(\theta)\}.
\end{align}
This family $\{p(x|\theta)\}$ is the exponential familiy with respect to $\phi$.  We refer to $\Theta$ as the \emph{natural parameters}, as contrast to the \emph{mean parameters} $\mu(\theta) = \E[\phi|\theta]$, which also parameterize the family provided certain regularity conditions are met~\cite{wainwright2008graphical}.  The function $g$ is called the \emph{cumulant}, and happens to generate the moments of $\phi$ under $p(x|\theta)$.  In particular, we have $\nabla g(\theta) = \E[\phi|\theta] = \mu(\theta)$.

Turning now to the conjugate prior for this family, let 
\begin{align}
  h(\nu,n) &= \log\int_\Theta \exp\{\inprod{\theta,\nu} - n g(\theta)\} d\theta\\
  p(\theta|\nu,n) &= \exp\{\inprod{\theta,\nu} - n g(\theta) - h(\nu,n)\}.\label{eq:conf-exp-fam-conj}
\end{align}
One can verify directly that $p(\theta|\nu,n)$ is a conjugate family to $p(x|\theta)$.  Moreover, the priors are themselves exponential families, with respect to statistic
$\psi(\theta) = \left[\begin{smallmatrix}
      \theta\\-g(\theta)
    \end{smallmatrix}\right]$.

As we saw above, it is easy to verify by direct calculation that the cumulant $g(\theta)$ satisfies $\nabla g(\theta) = \E_{x\sim p(x|\theta)} [ \phi(x) ]$.  A much less obvious fact, but a very useful one, is that the implied mean $\nu/n$ of the conjugate prior $p(\theta|\nu,n)$ is \emph{credible}, in the sense that the expected value of $\phi$ is in fact $\nu/n$.

\begin{theorem}[\cite{diaconis1979conjugate}]
  \label{thm:conf-credible-mean}
  Let $p(x|\theta)$ be an exponential family with cumulant $g(\theta)$ and let $p(\theta|\nu,n)$ be its conjugate prior.  Then
  \begin{equation}
    \label{eq:conf-credible}
    \int_{\X} \phi(x) p(x|\nu,n) dx = \int_{\Theta} \nabla g(\theta) p(\theta|\nu,n) = \nu/n.
  \end{equation}
\end{theorem}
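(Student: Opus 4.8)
The plan is to prove the two equalities in eq.~\eqref{eq:conf-credible} separately. The left equality, $\int_\X \phi(x) p(x|\nu,n)\,dx = \int_\Theta \nabla g(\theta) p(\theta|\nu,n)\,d\theta$, is the easy one: by definition the PPD is $p(x|\nu,n) = \int_\Theta p(x|\theta) p(\theta|\nu,n)\,d\theta$, so Fubini lets me swap the order of integration, and the inner integral $\int_\X \phi(x) p(x|\theta)\,dx = \E_{x\sim p(x|\theta)}[\phi(x)] = \nabla g(\theta)$ is exactly the moment-generating property of the cumulant $g$ noted just before the theorem. So the content of the theorem is really the right equality, $\int_\Theta \nabla g(\theta) p(\theta|\nu,n)\,d\theta = \nu/n$.

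First I would write out $p(\theta|\nu,n) = \exp\{\inprod{\theta,\nu} - n g(\theta) - h(\nu,n)\}$ from eq.~\eqref{eq:conf-exp-fam-conj} and differentiate the normalization identity $\int_\Theta \exp\{\inprod{\theta,\nu} - n g(\theta)\}\,d\theta = e^{h(\nu,n)}$ with respect to the hyperparameter $n$. Bringing the derivative inside the integral (justified by standard dominated-convergence arguments in the interior of the domain where these integrals are analytic in $(\nu,n)$), the left side produces $\int_\Theta (-g(\theta)) \exp\{\inprod{\theta,\nu} - ng(\theta)\}\,d\theta = -e^{h(\nu,n)} \E_{p(\theta|\nu,n)}[g(\theta)]$, and the right side produces $\partial_n h(\nu,n)\, e^{h(\nu,n)}$; hence $\E_{p(\theta|\nu,n)}[g(\theta)] = -\partial_n h(\nu,n)$. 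Similarly, differentiating in $\nu$ gives $\E_{p(\theta|\nu,n)}[\theta] = \nabla_\nu h(\nu,n)$. These say $h$ is itself the cumulant of the conjugate exponential family with respect to $\psi(\theta) = \left[\begin{smallmatrix}\theta\\-g(\theta)\end{smallmatrix}\right]$, which the excerpt already remarked.

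The crux is then to connect $\E_{p(\theta|\nu,n)}[\nabla g(\theta)]$ to $\nu/n$. Here I would use the conjugacy/self-consistency of the prior: observing one sample $x$ updates $(\nu,n) \mapsto (\nu + \phi(x), n+1)$, and integrating this update against the PPD $p(x|\nu,n)$ recovers the prior "on average" — more precisely, $\int_\X p(\theta|\nu+\phi(x), n+1) p(x|\nu,n)\,dx = p(\theta|\nu,n)$, since the left side is just $\int_\X p(\theta, x|\nu,n)\,dx$ marginalized over $x$. Taking the expectation of $\theta$ (or of $\phi$) under both sides and matching the mean-parameter updates forces a recursion that pins down $\E[\nabla g(\theta)]$. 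The cleanest route is probably a Stein-type integration-by-parts on $\Theta$: expand $\int_\Theta \nabla g(\theta)\, e^{\inprod{\theta,\nu} - ng(\theta) - h(\nu,n)}\,d\theta$ and integrate by parts, using $\nabla_\theta\big(e^{\inprod{\theta,\nu} - n g(\theta)}\big) = (\nu - n\nabla g(\theta))\, e^{\inprod{\theta,\nu} - n g(\theta)}$; the boundary terms vanish under the regularity hypotheses, leaving $0 = \E_{p(\theta|\nu,n)}[\nu - n\nabla g(\theta)]$, i.e. $\E[\nabla g(\theta)] = \nu/n$, which is exactly what we want.

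The main obstacle is the analytic bookkeeping rather than any deep idea: one must be careful that all the integrals are finite and differentiation under the integral sign is valid (this holds in the interior of the natural-parameter domain, which is what "certain regularity conditions"/"provided certain regularity conditions are met" in the surrounding text is buying us), and that the boundary terms in the integration by parts genuinely vanish — this is the step where Diaconis and Ylvisaker's regularity assumptions on the prior (essentially that $p(\theta|\nu,n)$ decays at the boundary of $\Theta$) are actually used. Modulo that, the argument is short, and I would cite \cite{diaconis1979conjugate} for the precise hypotheses rather than restate them.
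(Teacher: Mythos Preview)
The paper does not actually prove this theorem; it is stated in the appendix with a citation to \cite{diaconis1979conjugate} and no argument is given. So there is no ``paper's own proof'' to compare against.

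That said, your proposal is correct, and the integration-by-parts step you land on is precisely the Diaconis--Ylvisaker argument: from $\nabla_\theta\big(e^{\inprod{\theta,\nu} - n g(\theta)}\big) = (\nu - n\nabla g(\theta))\, e^{\inprod{\theta,\nu} - n g(\theta)}$, integrating over $\Theta$ and discarding the boundary term yields $\E_{p(\theta|\nu,n)}[\nu - n\nabla g(\theta)] = 0$. Your first equality via Fubini and $\nabla g(\theta) = \E_{p(x|\theta)}[\phi(x)]$ is also right. The middle paragraph---differentiating $h$ in $(\nu,n)$ and the ``self-consistency'' marginalization---is not needed for the final argument and could be dropped; it computes $\E[\theta]$ and $\E[g(\theta)]$ rather than $\E[\nabla g(\theta)]$, and the recursion idea, while suggestive, does not by itself close the loop. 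The only substantive caveat, which you already flag, is that the vanishing of the boundary term is exactly where the regularity hypotheses in \cite{diaconis1979conjugate} enter (openness of the natural parameter space and integrability conditions on the prior), so citing that reference for the precise conditions is the right call.
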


\raf{COMMENTED OUT: other examples}

\section{Conjectures for Exponential Families}
\label{sec:conf-conjectures}

Here we give intuition for the conjectures stated in Section~\ref{sec:conf-conclusion}.  For the first, Conjecture~\ref{con:conf-exp-fam-single}, note that when $\dim \phi = |\X|-1$, and the statistic is minimal, then $\Theta$ is just a reparameterization of the categorical distributions, $\Delta_\X$.  As we saw in Section~\ref{sec:conf-non-unique-case} that a single sample is insufficient for the categorical case, Conjecture~\ref{con:conf-exp-fam-single} is implied by the following alternate conjecture.

\begin{conjecture}
  \label{con:conf-exp-fam-injective}
  The map $\varphi:(\nu,n)\mapsto p(x|\nu,n)$ is injective for an exponential family conjugate prior if and only if $\dim \phi < |\X|-1$.
\end{conjecture}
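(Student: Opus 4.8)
\emph{Plan.} I would prove the two implications separately; the forward one is routine and the converse is where the work lies.

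\emph{If $\dim\phi = |\X|-1$, then $\varphi$ is not injective.} By minimality the $k+1$ functions $1,\phi_1,\dots,\phi_k$ are linearly independent on the finite set $\X$, and since $k+1=|\X|$ they span all functions on $\X$; hence $\theta\mapsto p(\cdot\,|\theta)=\mathrm{softmax}(\inprod{\phi(\cdot),\theta})$ is a bijection onto $\relint(\Delta_\X)$, so after an affine reparameterization of $\Theta$ the family $\{p(x|\theta)\}$ is the categorical family, with the Dirichlet family of eq.~\eqref{eq:conf-dirichlet} as its conjugate prior. The computation in Section~\ref{sec:conf-non-unique-case} then shows $p(\cdot\,|\nu,n)$ is the categorical distribution with parameter $\nu/n$, so $\varphi(\lambda\nu,\lambda n)=\varphi(\nu,n)$ for every $\lambda>0$ and $\varphi$ fails to be injective.

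\emph{If $\dim\phi < |\X|-1$ (which is automatic when $\X$ is infinite), then $\varphi$ is injective.} First recover $\nu/n$: by Theorem~\ref{thm:conf-credible-mean} we have $\E_{x\sim p(x|\nu,n)}[\phi(x)] = \nu/n$, so $\bar\mu := \nu/n$ is a known functional of $\varphi(\nu,n)$. It remains to recover $n$, i.e.\ to show that for each fixed $\bar\mu$ the ``ray map'' $t\mapsto q_t := p(\cdot\,|\,t\bar\mu,t)$ is injective over the interval of $t>0$ for which the prior is proper; combined with Corollary~\ref{cor:conf-unique} this also settles Conjecture~\ref{con:conf-exp-fam-single}. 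To establish injectivity of $t\mapsto q_t$ I would exhibit a strictly monotone functional of $q_t$. Two candidates suggest themselves: the predictive covariance $\mathrm{Cov}_{x\sim q_t}[\phi(x)] = \E_{\theta}[\nabla^2 g(\theta)] + \mathrm{Cov}_{\theta}[\nabla g(\theta)]$ with $\theta\sim p(\theta|t\bar\mu,t)$, which one expects to be strictly decreasing in $t$ in the Loewner order since more pseudo-data tightens the posterior and hence the predictive distribution; and the predictive entropy, via the identity $\kl\big(q_t\,\|\,p(\cdot\,|\theta^\star)\big) = H\big(p(\cdot\,|\theta^\star)\big) - H(q_t)$ where $\theta^\star := (\nabla g)^{-1}(\bar\mu)$ is independent of $t$ (because $\E_{q_t}[\phi]=\bar\mu$), so that recovering $n$ reduces to showing $H(q_t)$ strictly increases to $H\big(p(\cdot\,|\theta^\star)\big)$. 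Since each coordinate $q_t(x) = \exp\{h(t\bar\mu+\phi(x),\,t+1)-h(t\bar\mu,t)\}$ is real-analytic in $t$ (log-partition functions are analytic on the interior of their domain), any such functional is analytic in $t$, so it suffices to rule out local constancy; moreover a Laplace expansion of $p(\theta|t\bar\mu,t)$ about $\theta^\star$ as $t\to\infty$ gives $\mathrm{Cov}_{x\sim q_t}[\phi(x)] = \nabla^2 g(\theta^\star) + \tfrac{1}{t} M + o(1/t)$, which yields injectivity for large $t$ as soon as the matrix $M$ is nonzero.

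\emph{Main obstacle.} The crux is pushing the monotonicity/non-constancy argument through for \emph{all} $t$, and in particular understanding why $\dim\phi<|\X|-1$ is exactly what is needed: in the categorical case $\dim\phi = |\X|-1$ the two terms of the covariance identity cancel identically, so $\mathrm{Cov}_{q_t}[\phi]$, $H(q_t)$, and indeed every moment of $q_t$ depend only on $\bar\mu$ (matching the impossibility in Section~\ref{sec:conf-non-unique-case}); the proof must therefore exploit the extra outcomes, $|\X|\ge\dim\phi+2$, to force $M\neq 0$ and, more ambitiously, global monotonicity. I expect the cleanest route to the global statement is either a data-processing/martingale argument exhibiting $q_t$ as a strict ``smoothing'' of $q_{t'}$ for $t<t'$ (which would give $H(q_t)>H(q_{t'})$ directly), or combining the $t\to\infty$ expansion above with analyticity plus a separate argument near the lower endpoint of the ray. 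Carrying either through for general exponential families, rather than for the individually verified cases of Poisson, normal, and uniform in Section~\ref{sec:conf-uniq-pred-distr}, is the hard part.
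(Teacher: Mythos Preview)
The paper does not prove this statement: it is explicitly a \emph{conjecture}, and Appendix~\ref{sec:conf-conjectures} offers only heuristic support, not a proof. So there is no ``paper's own proof'' to compare against---only the paper's sketch of intuition.

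That said, your plan tracks the paper's intuition closely. For the non-injective direction, the paper argues (in one sentence) that when $k\ge |\X|-1$ minimality forces the mean $\nu/n$ to determine the PPD, so rescaling $(\nu,n)\mapsto(\lambda\nu,\lambda n)$ breaks injectivity; your reduction to the categorical/Dirichlet case is a concrete instantiation of the same idea. For the injective direction, the paper also first fixes $\bar\mu=\nu/n$ via the credible-mean property (Theorem~\ref{thm:conf-credible-mean}), then uses the ratio identity~\eqref{eq:conf-2} to argue concentration about the mode $\hat\theta=\nabla g^*(\bar\mu)$ and suggests showing that $\kl\bigl(p(x|\nu,n)\,;\,p(x|\hat\theta)\bigr)$ is monotone in $n$---exactly your entropy route, since your identity $\kl(q_t\,\|\,p(\cdot|\theta^\star))=H(p(\cdot|\theta^\star))-H(q_t)$ makes the two formulations equivalent. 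Your additional tools (analyticity of $t\mapsto q_t(x)$ via log-partition smoothness, and the Laplace expansion giving $\mathrm{Cov}_{q_t}[\phi]=\nabla^2 g(\theta^\star)+t^{-1}M+o(t^{-1})$) go beyond what the paper offers and would indeed settle injectivity for large $t$ once $M\neq 0$ is established.

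You are candid about the gap, and it is the same gap the paper leaves open: neither the paper nor your plan supplies the global monotonicity argument for all $t$, nor a proof that the condition $\dim\phi<|\X|-1$ is precisely what forces $M\neq 0$ (or, in your alternative route, what rules out local constancy of $H(q_t)$). Your diagnosis that in the categorical case the two covariance terms cancel identically is a nice sharpening of where the hypothesis must enter, but turning that into a proof that the cancellation \emph{cannot} occur when $|\X|\ge\dim\phi+2$ remains the substantive open step---for both you and the paper.
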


There is considerable intuition for this conjecture.  By Theorem~\ref{thm:conf-credible-mean} (the credible mean property of exponential family conjugate priors), to examine the injectivity of $\varphi$ we may restrict our attention to a fixed valued of $\mu = \nu/n$.  This is because if $\nu/n\neq\nu'/n'$, then $\varphi(\nu,n) \neq \varphi(\nu',n')$.  Given this fact, it is clear that the injectivity cannot hold whenever $k \defeq \dim \phi \geq |\X|-1$, because by minimality of $\phi$, the mean $\E[\phi] = \nu/n = \mu$ must uniquely identify the distribution, and thus scaling $n$ and taking $\nu=n\mu$ and yields the same PPD for all $n>0$.  Conversely, one can show by the form of the conjugate prior~\eqref{eq:conf-exp-fam-conj} that for our fixed value of $\mu$, we have
\begin{equation}
  \label{eq:conf-2}
  \frac {p(\theta|n\mu,n)}{p(\theta'|n\mu,n)} = \left(\frac {p(\theta|\mu,1)}{p(\theta'|\mu,1)} \right)^n~,
\end{equation}
for all $\theta,\theta'\in\Theta$ and all $n,\mu$.  Thus, as $n$ increases there is strong concentration in the prior about the mode $\hat \theta$, which one can show is equal to $\nabla g^*(\nu/n)$ by convex conjugacy, so that $\mu(\hat\theta)=\mu$.  It is clear then that the limit of $p(x|n\mu,n)$ as $n\to\infty$ is simply $p(x|\hat\theta)$.  It would therefore be natural to show that $\mathrm{KL}(p(x|\nu,n) \,;\, p(x|\hat \theta))$, or some other notion of distance, is monotone decreasing in $n$, which would then imply injectivity of $\varphi$.

For Conjecture~\ref{con:conf-exp-fam-double}, the intuition lies in a reparameterization of the conjugate prior distribution.  Let $\mu(\theta) = \nabla g(\theta)$ denote the mean parameter corresponding to $\theta$, and recall from the credible mean property that $\E[\phi(x)|\nu,n] = \E[\mu(\theta)|\nu,n] = \nu/n$.  Then by independence of $x_1,x_2$, we have $\E[\phi(x_1)\phi(x_2)^\tr | \nu,n] = \E[\mu(\theta)\mu(\theta)^\tr | \nu,n]$.  Thus, letting $r_1$ and $R_2$ be the reported values for the $\E[\phi(x_1)]$ and $\E[\phi(x_1)\phi(x_2)^\tr]$, we see that $\Var[\mu(\theta)|\nu,n]$ is simply $R_2 - r_1r_1^\tr$.  In other words, we can use this information to compute the variance of the posterior distribution \emph{of the mean parameters}.  That is, if we thought of the posterior as being a distribution $p(\mu|\nu,n)$ over mean parameters instead of over natural parameters $\theta$, we would be able to elicit the variance of this posterior.  Intuitively, this variance should correspond to the confidence of the agent, and in particular should be monotone decreasing in $n$, which would allow us to compute $n$ and thus optimally aggregate.

\end{document}